\newtheorem{theorem}{Theorem}
\newtheorem{lemma}{Lemma}
\newtheorem*{example}{Example}
\definecolor{A1}{RGB}{146,84,3}%
\definecolor{A2}{RGB}{8,51,110}%
\definecolor{A3}{rgb}{0,.59,.51}%
\definecolor{A4}{rgb}{.63,.13,.13}%
\definecolor{A5}{RGB}{160,0,120}%
\definecolor{kit-green100}{rgb}{0,.59,.51}%
\definecolor{kit-green70}{rgb}{.3,.71,.65}
\definecolor{kit-green50}{rgb}{.50,.79,.75}
\definecolor{kit-green30}{rgb}{.69,.87,.85}
\definecolor{kit-green15}{rgb}{.85,.93,.93}
\definecolor{KITgreen}{rgb}{0,.59,.51}
\definecolor{KITblue}{rgb}{.27,.39,.66}
\definecolor{KITorange}{rgb}{.87,.60,.10}
\definecolor{myblue}{RGB}{5,113,176}
\definecolor{myred}{RGB}{202,0,32}
\definecolor{mygreen}{RGB}{0,136,55}
\definecolor{KITpurple}{RGB}{160,0,120}
\definecolor{KITyellow}{rgb}{.78,.59,.20}
\definecolor{QC1}{RGB}{170,220,224}
\definecolor{QC2}{RGB}{114,188,213}
\definecolor{QC3}{RGB}{82,143,173}
\definecolor{QC4}{RGB}{55,103,149}
\definecolor{QC5}{RGB}{30,70,110}
\definecolor{EG1}{RGB}{255,230,163}
\definecolor{EG2}{RGB}{247,170,88}
\definecolor{EG3}{RGB}{231,98,84}
\definecolor{EG4}{RGB}{149,72,75}
\definecolor{EG5}{RGB}{114,100,158}
\renewcommand{\vec}[1]{\bm{#1}}
\renewcommand{\vec}[1]{\bm{#1}}
\newcommand\blfootnote[1]{%
  \begingroup
  \renewcommand\thefootnote{}\footnote{#1}%
  \addtocounter{footnote}{-1}%
  \endgroup
}
\acrodef{BCH}{Bose--Chaudhuri--Hocquenghem}
\acrodef{FER}{frame error rate}
\acrodef{BSC}{binary symmetric channel}
\acrodef{DE}{density evolution}
\acrodef{LDPC}{low-density parity-check}
\acrodef{QLDPC}{quantum low-density parity-check}
\acrodef{VN}{variable node}
\acrodef{CN}{check node}
\acrodef{BP}{belief propagation}
\acrodef{GB}{generalized bicycle}
\acrodef{PCM}{parity check matrix}
\acrodef{OSD}{ordered statistics decoding}
\acrodef{QEC}{quantum error correction}
\acrodef{QSC}{quantum stabilizer code}
\acrodef{CSS}{Calderbank–Shor–Steane}
\acrodef{NN}{neural network}
\acrodef{LLR}{log-likelihood ratio}
\acrodef{MWPM}{minimum weight perfect matching}
\acrodef{MS}{min-sum}
\acrodef{FG}{finite geometry}
\acrodef{EG}{Euclidean geometry}
\acrodef{PG}{projective geometry}
\acrodef{QC}{quasi-cyclic}
\acrodef{CPM}{circulant permutation matrix}
\acrodef{ML}{maximum likelihood}
\acrodef{CAMEL} {\textbf{C}ycle \textbf{A}ssembling and \textbf{M}itigating \textbf{E}nsemb\textbf{L}e decoding)}
\definecolor{cR1}{rgb}{0,.59,.51}
\definecolor{cR2}{RGB}{162,34,35}
\definecolor{cR3}{RGB}{217,115,14}
\definecolor{cR4}{RGB}{117,75,34}
\pgfplotsset{compat=1.18}
\begin{document}

\title{A Joint Code and Belief Propagation Decoder Design for Quantum LDPC Codes}

 \author{%
   \IEEEauthorblockN{Sisi Miao, Jonathan~Mandelbaum, Holger~Jäkel, and Laurent Schmalen}
   \IEEEauthorblockA{Karlsruhe Institute of Technology (KIT),
Communications Engineering Lab (CEL),
76187 Karlsruhe, Germany\\
Email: {\{\texttt{firstname.lastname@kit.edu\}}}}\vspace{-1em}}

\maketitle

\begin{abstract}
Quantum low-density parity-check (QLDPC) codes are among the most promising candidates for future quantum error correction schemes. However, a limited number of short to moderate-length QLDPC codes have been designed and their decoding performance is sub-optimal with a quaternary belief propagation (BP) decoder due to unavoidable short cycles in their Tanner graphs. In this paper, we propose a novel joint code and decoder design for QLDPC codes. The constructed codes have a minimum distance of about the square root of the block length. In addition, it is, to the best of our knowledge, the first QLDPC code family where BP decoding is not impaired by short cycles of length 4. This is achieved by using an ensemble BP decoder mitigating the influence of assembled short cycles.  We outline two code construction methods based on classical quasi-cyclic codes and finite geometry codes. Numerical results demonstrate outstanding decoding performance over depolarizing channels. 
\end{abstract}
\begin{IEEEkeywords}
Quantum error correction, LDPC codes, belief propagation decoding
\end{IEEEkeywords}
\vspace{-1em}

\section{Introduction}\label{sec:introduction}
\blfootnote{This work has received funding from the European Research Council (ERC) under the European Union’s Horizon 2020 research and innovation programme (grant agreement No. 101001899) and the German Federal Ministry of Education and Research (BMBF) within the project Open6GHub (grant agreement 16KISK010).}
\Acp{QLDPC} codes are among the most promising candidates for future \ac{QEC} schemes~\cite{Gottesman14fault}. Various promising code constructions have been proposed, e.g., ~\cite{mackay2004sparse,Hagiwara2007QQLDPC,Aly08QEG,tillich2013quantum,panteleev2022asymptotically}. Still, some challenges remain. First, only a small number of good short to moderate-length QLDPC codes have been constructed, which are of special interest due to their low implementation complexity and their low decoding latency. Second, most construction methods ignore the structure of the underlying Tanner graph of the constructed codes. Thus, unavoidable short cycles of length 4 significantly impair the quaternary \ac{BP} decoding performance. However, the good decoding performance for classical LDPC codes and low decoding latency make BP an attractive candidate for \ac{QEC}. Therefore, many previous works have tried to improve the decoding performance in the presence of short cycles, e.g., by modifying the BP decoder~\cite{lai2021log,Babar15, panteleev2021degenerate,miao2023quaternary}, or by introducing post-processing steps such as \ac{OSD}~\cite{panteleev2021degenerate, roffe2020decoding}. However, these approaches cannot guarantee to completely mitigate the influence of the short cycles. Furthermore, the extra decoding latency introduced by these methods makes them less appealing for \ac{QEC} where linear or even constant time decoding complexity is desired to achieve ultra-low latency decoding. For example, the complexity of \ac{OSD} is $\mathcal{O}\left(n^3\right)$ while the complexity of BP decoding is only $\mathcal{O}\left(n\right)$.

In this work, we construct short to moderate-length \ac{QLDPC} codes with a good decoding performance with only BP decoding, achieved by joint code and decoder design such that the proposed ensemble \ac{BP} decoding is not impaired by short cycles of length 4. The proposed scheme is referred to as \textbf{CAMEL} (\textbf{C}ycle \textbf{A}ssembling and \textbf{M}itigating with \textbf{E}nsemb\textbf{L}e decoding). We introduce two exemplary code construction methods in this paper. The first one is constructed from classical \ac{QC} codes. The second one reuses the construction in~\cite{Aly08QEG} which is based on \acp{FG}. We evaluate the performance of CAMEL with numerical simulations over depolarizing channels.

\textbf{Notation}: Boldface letters denote vectors and matrices, e.g., $\vec{a}$ and $\vec{A}$. The $i$-th component of vector $\vec{a}$ is denoted by $a_i$, and the element at the $i$-th row and $j$-th column of $\vec{A}$ is denoted by $A_{i,j}$. Let $\vec{A}_{i,:}$ be the $i$-th row of a matrix $\vec{A}$ and $\vec{A}_{:,i}$ the $i$-th column of $\vec{A}$. $\vec{A}^{\mathsf{T}}$ denotes the matrix transpose. The set $\{0,1,2,\cdots,p-1\}$ is denoted by $[p]$ for any $p\in \mathbb{N}$. The trace inner product for $x,y\in \mathbb{F}_4$ is written as $\langle x,y\rangle \in \{0,1\}$. It evaluates to $1$ if $x\neq 0, y\neq 0$ and $x\neq y$, and $0$ otherwise. We use $\oplus$ to denote binary summation. The indicator function is denoted by $\mathbbm{1}_{\{\cdot\}}$. Throughout the paper, the indexing of vector and matrix elements always starts from $0$.

\section{Preliminaries}\label{sec:preliminaries}
We consider a \ac{CSS} type \ac{QSC}~\cite{calderbank1998quantum} given by Theorem~\ref{theorem:css}.

\begin{theorem}\label{theorem:css}
      Consider two classical binary linear codes $\mathcal{C}_1$ and $\mathcal{C}_2$ with parameters $[n,k_1,d_1]$  and $[n,k_2,d_2]$. Their \acp{PCM} are $\vec{H}_X\in \mathbb{F}_2^{(n-k_1)\times n}$ and $\vec{H}_Z\in \mathbb{F}_2^{(n-k_2)\times n}$, respectively. If $\mathcal{C}_2^{\perp}\subseteq \mathcal{C}_1$, i.e., satisfying the so-called twisted condition
\begin{equation}
    \vec{H}_X \vec{H}_Z^{\mathsf{T}} = \vec{0},
    \label{eq:symplectic_criterion_css}
\end{equation} an $[[n,k_1+k_2-n,\min\{d_1,d_2\}]]$ \ac{QSC} can be constructed.
\end{theorem}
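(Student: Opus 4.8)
The plan is to realize the quantum code in the stabilizer formalism and then verify its three parameters $[[n,\,k_1+k_2-n,\,\min\{d_1,d_2\}]]$ one at a time. First I would associate to each row of $\vec{H}_X$ an $X$-type Pauli operator and to each row of $\vec{H}_Z$ a $Z$-type Pauli operator, and declare the group $\mathcal{S}$ generated by these $2n-k_1-k_2$ operators to be the stabilizer group. For $\mathcal{S}$ to define a valid \ac{QSC} it must be abelian (and must not contain $-I$); the abelian property is the only real content. Two $X$-type generators automatically commute, as do two $Z$-type generators, so the only relations to check are those between an $X$-type generator (a row $\vec{h}$ of $\vec{H}_X$) and a $Z$-type generator (a row $\vec{g}$ of $\vec{H}_Z$). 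Such a pair commutes exactly when the binary inner product $\vec{h}\,\vec{g}^{\mathsf{T}}$ vanishes, so demanding commutativity for every pair is precisely the requirement $\vec{H}_X\vec{H}_Z^{\mathsf{T}}=\vec{0}$, which is the hypothesis~\eqref{eq:symplectic_criterion_css} (equivalently $\mathcal{C}_2^{\perp}\subseteq\mathcal{C}_1$). This step is short but is where the twisted condition enters.

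Next I would count the encoded qubits. The number of logical qubits equals $n$ minus the number of independent stabilizer generators. Because $X$-type and $Z$-type operators occupy disjoint halves of the binary symplectic representation, the independent-generator count is $\operatorname{rank}\vec{H}_X+\operatorname{rank}\vec{H}_Z$; since $\vec{H}_X$ and $\vec{H}_Z$ are \acp{PCM} of $[n,k_1]$ and $[n,k_2]$ codes, their ranks are $n-k_1$ and $n-k_2$. Hence there are $(n-k_1)+(n-k_2)$ independent generators and $k = n-(n-k_1)-(n-k_2) = k_1+k_2-n$ logical qubits.

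The distance is the delicate part, and here I would exploit the \ac{CSS} structure to treat $X$-type and $Z$-type logical operators separately. A pure $X$-type operator $X(\vec{v})$ lies in the normalizer of $\mathcal{S}$ iff it commutes with every $Z$-generator, i.e. iff $\vec{H}_Z\vec{v}^{\mathsf{T}}=\vec{0}$, so $\vec{v}\in\ker\vec{H}_Z=\mathcal{C}_2$; it is itself a product of stabilizers iff $\vec{v}\in\operatorname{rowspace}\vec{H}_X=\mathcal{C}_1^{\perp}$. A non-trivial $X$-logical therefore corresponds to a vector in $\mathcal{C}_2\setminus\mathcal{C}_1^{\perp}$, and since $\vec{0}\in\mathcal{C}_1^{\perp}$ this set is contained in $\mathcal{C}_2\setminus\{\vec{0}\}$, so its minimum weight is at least $d_2$. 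The symmetric argument gives non-trivial $Z$-logicals as $\vec{w}\in\mathcal{C}_1\setminus\mathcal{C}_2^{\perp}$ of weight at least $d_1$. As every logical Pauli decomposes, up to stabilizers and phase, into an $X$- and a $Z$-part, the minimum weight of any non-trivial logical operator — the code distance — is at least $\min\{d_1,d_2\}$, the guaranteed distance claimed.

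The main obstacle I anticipate is precisely this distance step: one must be careful that the relevant quantity is a coset-leader weight over $\mathcal{C}_2/\mathcal{C}_1^{\perp}$ (respectively $\mathcal{C}_1/\mathcal{C}_2^{\perp}$) rather than the raw code distance, and that $\min\{d_1,d_2\}$ is in general a lower bound, since degeneracy can make the true distance strictly larger. By contrast, the commutativity and dimension steps reduce to routine bookkeeping once the symplectic picture is in place.
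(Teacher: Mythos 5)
Your proof is correct: it is the standard stabilizer-formalism argument for the CSS construction (commutativity of the $X$- and $Z$-type generators from \eqref{eq:symplectic_criterion_css}, qubit counting from the ranks of $\vec{H}_X$ and $\vec{H}_Z$, and the distance bound via the cosets $\mathcal{C}_2/\mathcal{C}_1^{\perp}$ and $\mathcal{C}_1/\mathcal{C}_2^{\perp}$), and your closing remark that $\min\{d_1,d_2\}$ is in general only a lower bound on the true distance of a degenerate code is also accurate. Note that the paper does not prove Theorem~\ref{theorem:css} at all; it cites it from \cite{calderbank1998quantum}, so your argument is precisely the standard proof that the paper implicitly relies on.
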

Unless mentioned differently, we restrict ourselves to the case where ${k_1=k_2=:k}$. Thus, the check matrix of the \ac{QSC} is written as
\begin{equation}
    \vec{S} = \left(\begin{array}{c}
\omega \vec{H}_X    \\
\bar{\omega}\vec{H}_Z
\end{array}
\right)\in \mathbb{F}_4^{2(n-k)\times n}
    \label{eq:css_quarternary}
\end{equation} with $\omega$ and $\bar{\omega}$ being elements of the Galois field $\mathbb{F}_4=\{0,1,\omega,\bar{\omega}\}$.
Furthermore, $\vec{H}_X$ and $\vec{H}_Z$ being sparse matrices results in a \ac{QLDPC} code. 

To estimate the error $\bm{e}\in \mathbb{F}_4^n$ that occurred, the syndrome ${\bm{z}\in \mathbb{F}_2^{2(n-k)}}$ is measured on the stabilizer generators. For simulation purpose, the syndrome $\bm{z}$ is computed as $z_j = {\bigoplus_{i\in [n]}} \langle e_{i}, S_{j,i} \rangle$ with $j\in [2(n-k)]$. In this work, we perform quaternary \ac{BP} decoding~\cite{DM98} on the Tanner graph associated with the check matrix $\vec{S}$. A Tanner graph is a bipartite graph with two sets of vertices: the \acp{VN} corresponding to the code symbols and the \acp{CN} corresponding to the checks and thus to rows of $\bm{S}$. A VN $\mathsf{v}_i$ is connected to a CN $\mathsf{c}_j$ if the corresponding entry $S_{j,i}\neq 0$. 

Note that \eqref{eq:symplectic_criterion_css} requires an even number of overlapping ones between any row of $\bm{H}_X$ and any row of $\bm{H}_Z$. Therefore, the Tanner graph associated with $\bm{S}$ has a girth of either infinity or $4$. First, if the Tanner graph has a girth of infinity, i.e., it is a tree, then the Tanner graphs associated with $\bm{H}_X$ and $\bm{H}_Z$ are necessarily trees. Such binary codes are known to have poor minimum distance~\cite{ETV99}. Hence, short cycles are necessary in constructing good QLDPC codes. However, the Tanner graph structure needs careful optimization for BP decoding.
One potential solution is to construct component matrices $\bm{H}_{X}$ and $\bm{H}_{Z}$ with a girth of at least $6$ yet permitting 4-cycles between $\bm{H}_{X}$ and $\bm{H}_{Z}$, as, e.g., done in~\cite{Hagiwara2007QQLDPC}. Therefore, acceptable performance can be attained by decoding the $\bm{X}$ and $\bm{Z}$ errors separately using two binary \ac{BP} decoders operating on the \acp{PCM} $\bm{H}_{Z}$ and $\bm{H}_{X}$, respectively. However, binary decoding ignores the correlation between $\bm{X}$ and $\bm{Z}$ errors and is thus inherently sub-optimal, see, e.g.,~\cite{lai2021log,panteleev2021degenerate,miao2023quaternary} for a performance comparison.
Hence, to take the correlation into account while still mitigating the influence of short cycles, 
we propose CAMEL, a novel code construction where an ensemble of quaternary BP decoding is not impaired by short cycles of length 4.

\section{CAMEL: Joint Code and Decoder Design}\label{sec:code_and_decoder}

 CAMEL consists of the construction of codes where all short cycles are assembled onto a single \ac{VN} and an ensemble BP decoder in which the influence of short cycles is fully mitigated.

To this end, we first construct two classical binary LDPC codes whose parity check matrices $\vec{H}_1,\vec{H}_2\in \mathbb{F}_2^{m\times n}$ fulfill
\begin{equation}
    \vec{H}_1\vec{H}_2^{\mathsf{T}} = \vec{1}_{m\times m},
    \label{eq:HHT1}
\end{equation}
where $\vec{1}_{m\times m}$ denotes an all-one matrix of size $m\times m$. A straightforward way to satisfy \eqref{eq:HHT1} is to construct matrices $\vec{H}_1$ and $\vec{H}_2$ such that any row of $\vec{H}_1$ overlaps with any row of $\vec{H}_2$ in exactly one position. It is possible to construct $\vec{H}_1$ and $\vec{H}_2$ fulfilling \eqref{eq:HHT1} such that the Tanner graph associated with the matrix $\begin{pmatrix}
    \vec{H}_1^{\mathsf{T}}&\vec{H}_2^{\mathsf{T}}
\end{pmatrix}^{\mathsf{T}}$ has a girth of at least $6$.
Two explicit code constructions will be presented in the following sections.

Next, we construct two new \acp{PCM} by appending an all-one column vector $\vec{1}_{m\times 1}$ to $\vec{H}_1$ and $\vec{H}_2$, respectively, resulting in
\begin{equation*}
    \vec{H}_{X} = \begin{pmatrix}\vec{H}_1 \mid \vec{1}_{m\times 1}\\\end{pmatrix} \in \mathbb{F}_2^{m\times (n+1)}
\end{equation*}
and 
\begin{equation*}
    \vec{H}_{Z} = \begin{pmatrix}\vec{H}_2 \mid \vec{1}_{m\times 1}\\\end{pmatrix} \in \mathbb{F}_2^{m\times (n+1)}.
\end{equation*}
Then, we can verify that
\[
\vec{H}_X\vec{H}_Z^{\mathsf{T}} = \begin{pmatrix}\vec{H}_1 \mid \vec{1}\\\end{pmatrix} \begin{pmatrix}\vec{H}_2^{\mathsf{T}} \\ \vec{1}^{\mathsf{T}} \end{pmatrix} = \vec{H}_1\vec{H}_2^{\mathsf{T}} + \vec{1}_{m\times m} = \vec{0}
\]
and \eqref{eq:symplectic_criterion_css} is fulfilled. Note that all cycles of length $4$ are nested on $\mathsf{v}_n$. Then, using Theorem~\ref{theorem:css}, we obtain an $[[n+1, n+1-\text{rank}(\bm{H}_X)-\text{rank}(\bm{H}_Z),d]]$ QLDPC code. 

\begin{figure}
    \centering
    		\tikzset{line/.style={-latex}} 
		\begin{tikzpicture}
    \node[draw=none] at (0,3.2) {\footnotesize$\vec{z}$};
    \node[draw,circle,fill,inner sep=1pt] at (0.3,3)(s) {};
    \node[draw,circle,fill,inner sep=1pt] at (0.3,3.3) (s1){};
    \node[draw,circle,fill,inner sep=1pt] at (0.3,2.7) (s2) {};

    \draw[] (-0.2,3) -- (s);
    
    \draw[] (2,3.7) rectangle ++(1,0.4);
    \node[draw=none] at (1,4.1) {\footnotesize $v_n=0$};
    \node[draw=none] at (2.5,3.9)(BP1) {\footnotesize BP};  
    \draw[line] (s1) |- (2,3.9);
    
    \draw[] (2,3.1) rectangle ++(1,0.4);
    \node[draw=none] at (1,3.5) {\footnotesize $v_n=1$};
    \node[draw=none] at (2.5,3.3)(BP2) {\footnotesize BP};
    \draw[line] (s) |- (2,3.3);
    
    \draw[] (2,2.5) rectangle ++(1,0.4);
    \node[draw=none] at (1,2.9) {\footnotesize $v_n=\omega$};
    \node[draw=none] at (2.5,2.7)(BP3) {\footnotesize BP};
    \draw[line] (s) |- (2,2.7);
    \draw[] (2,1.9) rectangle ++(1,0.4);
    \node[draw=none] at (1,2.3) {\footnotesize $v_n=\bar{\omega}$};
    \node[draw=none] at (2.5,2.1)(BP4) {\footnotesize BP};
    \draw[line] (s2) |- (2,2.1);

    \node[draw,rectangle,text width=0.4cm, minimum height=1cm,align=center] at (4.5,3)(w) {\footnotesize ML};

    \node[draw=none] at (3.5,4.1) {\footnotesize $\hat{\vec{e}}_0$};
    \node[draw=none] at (3.5,3.5) {\footnotesize $\hat{\vec{e}}_1$};
    \node[draw=none] at (3.5,2.9) {\footnotesize $\hat{\vec{e}}_2$};
    \node[draw=none] at (3.5,2.3) {\footnotesize $\hat{\vec{e}}_3$};

    \draw[line] (3,3.9) -| (w);
    \draw[line] (3,3.3) -- (4.2,3.3);
    \draw[line] (3,2.7) -- (4.2,2.7);
    \draw[line] (3,2.1) -| (w);
    
     \node[draw=none] at (5.5,3.2) {\footnotesize$\hat{\vec{e}}$};
     \draw[line] (w) -- (6,3);
     \end{tikzpicture}
    \caption{Block diagram of the proposed ensemble decoder.}
    \vspace{-1em}
    \label{fig:block_diagramm}
\end{figure}
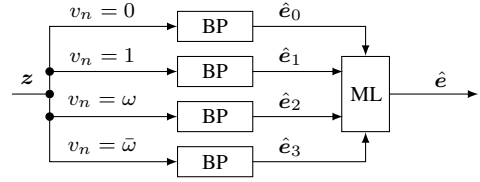

We proceed by introducing the decoder part of CAMEL which essentially relies on ensemble decoding~\cite{MBBP}. As depicted in Fig.~\ref{fig:block_diagramm}, after measuring the syndrome $\vec{z}$, four BP decodings are performed in parallel. 
In each BP decoding, the last bit is assigned a distinct fixed value $\eta\in \mathbb{F}_4$. The effect of doing so will be discussed below.

Let $\hat{\vec{e}}_i$ denote the error estimate of the BP decoding in the ${i\text{-th}}$ path of the decoder shown in Fig.~\ref{fig:block_diagramm} and define ${\mathcal{I}:=\{{i\in [4]: \hat{\vec{e}}_i \text{ satisfies syndrome }\bm{z}}\}}$. If $|\mathcal{I}|=0$, the decoder declares a decoding failure.
Otherwise, we perform a \textit{\ac{ML}-in-the-list} step by choosing the error candidate $\hat{\vec{e}}=\hat{\vec{e}}_{i^{\star}}$, ${i^{\star}\in\mathcal{I}}$, of lowest weight, i.e.,
$i^{\star} = \text{arg } \underset{i\in \mathcal{I}}{\text{min}} \text{ w} (\hat{\vec{e}}_i)$ where $\text{w}(\cdot)$ denotes the number of nonzero elements in the argument.

The approach of decoding using multiple decoders with hard guesses for certain bits is known as \textit{decimation} in classical coding theory, e.g., in~\cite{Vahid15}.
Next, we analyze the influence of decimating one bit in BP decoding for QLDPC codes.
To this end, we consider a quarternary BP decoder passing probabilities.
One can verify that the upcoming conclusions also hold for BP decoders with \ac{LLR} message passing~\cite{DM98,DF07} and their refined version~\cite{lai2021log}.

Assume that we provide the decoder with the hard guess $\eta$ of the last \ac{VN}, i.e., $\mathsf{v}_n=\eta$.
For an arbitrary decoding iteration, the outgoing message vector of $\mathsf{v}_n$ to a neighboring \ac{CN} $\mathsf{c}_j$ is $\begin{pmatrix}
    m_{\mathsf{v},n\rightarrow j}^{(0)}&m_{\mathsf{v},n\rightarrow j}^{(1)}&m_{\mathsf{v},n\rightarrow j}^{(\omega)}&m_{\mathsf{v},n\rightarrow j}^{(\bar{\omega})}
\end{pmatrix}$
where 
\begin{equation}\label{eq:VNmsg1}
    m_{\mathsf{v},n\rightarrow j}^{(a)} = b\cdot P_{\mathrm{ch}}(\mathsf{v}_n=a)\prod_{j'\in \mathcal{M}(n)\backslash\{j\}} m_{\mathsf{c},j'\rightarrow n}^{(a)}
\end{equation}
for $a\in \mathbb{F}_4$ with $\mathcal{M}(i)$ denoting the set of indices of the neighboring CNs of VN $\mathsf{v}_i$.
The parameter $b$ is a normalization factor such that the probabilities sum up to $1$. As we provide a hard guess of $\mathsf{v}_n=\eta$, the channel probability is $P_{\mathrm{ch}}(\mathsf{v}_n=a)=\mathbbm{1}_{\{a = \eta\}}$. Therefore, the outgoing message of $\mathsf{v}_n$ in \eqref{eq:VNmsg1} evaluates to
\begin{equation}\label{eq:VNmsg}
    m_{\mathsf{v},n\rightarrow j}^{(a)}=\mathbbm{1}_{\{a = \eta\}},
\end{equation}
regardless of the incoming messages from the \acp{CN}.
We now inspect the outgoing message of a \ac{CN} $\mathsf{c}_j$ that is a neighbor of the decimated \ac{VN} $\mathsf{v}_n$, i.e., $j\in\mathcal{M}(n)$:
\begin{equation}\label{eq:CNmsg1}
    m_{\mathsf{c},j\rightarrow i}^{(a)} = c\cdot \sum_{\bm{t}:t_i=a}f_{\bm{S}_{j,:}}(\bm{t},z_j) \prod_{i'\in \mathcal{N}(j)\backslash\{i\}}m_{\mathsf{v},i'\rightarrow j}^{(t_{i'})}
\end{equation}
where $c$ is again some normalization factor and $\mathcal{N}(j)$ denotes the set of indices of the neighboring VNs of $\mathsf{c}_j$. The check function $f_{\bm{S}_{j,:}}(\bm{t},z_j)$ indicates if an error vector $\bm{t}$ fulfills the syndrome $z_j$ specified by the $j$-th row of the check matrix $\bm{S}_{j,:}$, i.e., , $f_{\bm{S}_{j,:}}(\bm{t},z_j)=1$ if ${\bigoplus}_{i\in \mathcal{N}(j)} \langle t_{i}, S_{j,i} \rangle= z_j$ and $0$ otherwise. Together with \eqref{eq:VNmsg}, we derive that for any $i\in \mathcal{N}(j), i\neq n$, the outgoing message in \eqref{eq:CNmsg1} is
\begin{align}
    &m_{\mathsf{c},j\rightarrow i}^{(a)} =  
   c \cdot \sum_{\bm{t}:t_i=a,t_n=\eta} f_{\bm{S}_{j,:}}(\bm{t},z_j) \,\,\,\,\,\,\,\,\,\,\,\,\,\,\,\mathclap{\prod_{i'\in \mathcal{N}(j)\backslash\{i,n\}}} \;\,\,\,\;\,\,\,\,\,m_{\mathsf{v},i'\rightarrow j}^{(t_{i'})}\nonumber\\
   &=c\cdot \;\,\,\,\mathclap{\sum_{\bm{t}_{\sim n}:t_i=a}}\;\,\,\,\,\langle \eta,S_{j,n}\rangle \oplus  f_{\bm{S}_{j,\sim n}}(\bm{t}_{\sim n},z_j) \;\,\,\,\;\,\,\,\,\,\,\,\,\mathclap{\prod_{i'\in \mathcal{N}(j)\backslash\{i,n\}}} \;\,\,\,\;\,\,\,\,\, m_{\mathsf{v},i'\rightarrow j}^{(t_{i'})},
   \label{eq:decimation}
\end{align}
where $\bm{t}_{\sim n}$ and $\bm{S}_{j,\sim n}$ denote $\bm{t}$ and $\bm{S}_{j,:}$ excluding their last entry, i.e., $t_n$ and $S_{j,n}$, respectively.

From \eqref{eq:decimation}, it follows that messages associated with $\mathsf{v}_n$ are excluded in \ac{BP}, except for $\langle \eta,S_{j,n}\rangle$, which is a constant. Thus, $\mathsf{v}_n$ can be removed together with all edges incident to it. This essentially dissolves all short cycles of length 4 in each decoding path as they are all assembled on $\mathsf{v}_n$. Hence, their influence is fully mitigated.

\section{Quasi-Cyclic QLDPC Codes}\label{sec:coded_from_QC}

In \cite{Hagiwara2007QQLDPC}, a class of \ac{QC} QLDPC codes without short cycles of length $4$ in their component matrices $\bm{H}_X$ and $\bm{H}_Z$ were constructed. In this section, we adapt this method to construct codes fulfilling (\ref{eq:HHT1}). We first introduce a few definitions.

Let $p$ be a prime number and $\mathbb{F}_p$ be the prime field. We focus on a class of classical binary \ac{QC} LDPC codes defined as the null space of a \ac{PCM}
\begin{equation}
\setlength\arraycolsep{2pt}
    \bm{H} = \begin{pmatrix}
        I(c_{0,0})&I(c_{0,1})&\cdots&I(c_{0,L-1})\\
        I(c_{1,0})&I(c_{1,1})&\cdots&I(c_{1,L-1})\\
        \vdots&\vdots&\ddots&\vdots\\
        I(c_{J-1,0})&I(c_{J-1,1})&\cdots&I(c_{J-1,L-1})\\
    \end{pmatrix}\in \mathbb{F}_2^{Jp\times Lp},
\end{equation}
obtained by replacing each element $c_{i,j}\in [p]$ of a base matrix $\bm{\mathcal{H}}\in[p]^{J\times L}$ by a \ac{CPM} $I(c_{i,j})$. A \ac{CPM} $I(x)$ is obtained by cyclically right shifting all the rows of the identity matrix $\bm{I}\in \mathbb{F}_2^{p\times p}$ by $x$ positions. For brevity, we denote the procedure of obtaining a binary PCM from a base matrix $\bm{\mathcal{H}}$ as $\bm{H}=$ Cyc($\bm{\mathcal{H}}$). Note that the commutative group of \acp{CPM} $I(x)$ with ${x\in[p]}$ under matrix multiplication is isomorphic to the additive group of $\mathbb{F}_p$ as $I(x)I(y) = I((x+y)\text{ mod }p)$. Furthermore, for a commutative group $\mathcal{G}$ and a subgroup ${\mathcal{G}'}$ of $\mathcal{G}$, the coset of ${g\in \mathcal{G}}$ w.r.t. $\mathcal{G}'$ is defined as
$[g]_{\mathcal{G}'}:=\{gh:h\in {\mathcal{G}'}\}$. Note that two cosets are either identical or disjoint.

Following the notation of \cite{Hagiwara2007QQLDPC}, a vector $\bm{x}\in \mathbb{F}_p^L$ is called \textit{multiplicity odd} if every element of $\bm{x}$ occurs an odd number of times and \textit{multiplicity free} if every element of $\bm{x}$ is unique. The vector $\bm{x}$ is a \textit{permutation vector} if it contains all elements of $\mathbb{F}_p$ exactly once. Hence, a permutation vector of length $p$ is multiplicity odd and free. Next, we need Lemma~\ref{lemma:condition_for_HHT1}.

\begin{lemma}\label{lemma:condition_for_HHT1} Consider two matrices ${\bm{\mathcal{A}}\in \mathbb{F}_p^{J_1\times L}}$ and $\bm{\mathcal{B}}\in \mathbb{F}_p^{J_2\times L}$. Then, 
$
\mathrm{Cyc}(\bm{\mathcal{A}})\mathrm{Cyc}(\bm{\mathcal{B}})^{\mathsf{T}} =\bm{1}_{ J_1p\times J_2p}
$
if and only if the difference vector between any two rows of $\bm{\mathcal{A}}$ and $\bm{\mathcal{B}}$ is multiplicity odd and contains all elements of $\mathbb{F}_p$.
\end{lemma}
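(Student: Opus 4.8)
The plan is to expand the product $\mathrm{Cyc}(\bm{\mathcal{A}})\mathrm{Cyc}(\bm{\mathcal{B}})^{\mathsf{T}}$ block by block and exploit the group structure of the \acp{CPM} to reduce each $p\times p$ block to a sum of shifts governed by a single difference vector. Viewing $\mathrm{Cyc}(\bm{\mathcal{A}})$ as a $J_1\times L$ array of blocks $I(\mathcal{A}_{r,j})$ and $\mathrm{Cyc}(\bm{\mathcal{B}})$ likewise, the $(r,s)$-th $p\times p$ block of the product is $\sum_{j=0}^{L-1} I(\mathcal{A}_{r,j})\,I(\mathcal{B}_{s,j})^{\mathsf{T}}$, where, as everywhere, the entrywise addition is over $\mathbb{F}_2$. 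First I would record the two elementary facts $I(x)^{\mathsf{T}}=I(-x)$ (a right shift transposes to a left shift) and $I(x)I(y)=I((x+y)\bmod p)$ (the stated group isomorphism). Combining them collapses the summand to $I\big((\mathcal{A}_{r,j}-\mathcal{B}_{s,j})\bmod p\big)$, so the $(r,s)$-th block equals $\sum_{j} I(d^{(r,s)}_j)$ with $\bm{d}^{(r,s)}:=\bm{\mathcal{A}}_{r,:}-\bm{\mathcal{B}}_{s,:}\in\mathbb{F}_p^L$ the difference of the two rows.

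The key step is then to characterize when $\sum_{j} I(d_j)=\bm{1}_{p\times p}$ over $\mathbb{F}_2$ for a fixed vector $\bm{d}\in\mathbb{F}_p^L$. I would group the sum by value: letting $n_v$ be the number of indices $j$ with $d_j=v$, the block equals $\sum_{v\in\mathbb{F}_p} (n_v \bmod 2)\, I(v)$. The decisive observation is that the $p$ matrices $\{I(v):v\in\mathbb{F}_p\}$ have pairwise disjoint supports — $I(v)$ carries its ones exactly on the entries $(a,b)$ with $b-a\equiv v\pmod p$ — hence they are linearly independent over $\mathbb{F}_2$ and form a basis of the space of $p\times p$ circulant matrices. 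Since $\bm{1}_{p\times p}=\sum_{v\in\mathbb{F}_p} I(v)$ is the unique expansion of the all-ones matrix in this basis, comparing coefficients shows $\sum_j I(d_j)=\bm{1}_{p\times p}$ if and only if $n_v\equiv 1 \pmod 2$ for every $v\in\mathbb{F}_p$; that is, every element of $\mathbb{F}_p$ occurs an odd, and in particular nonzero, number of times in $\bm{d}$.

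Finally I would assemble the two directions. The product equals $\bm{1}_{J_1p\times J_2p}$ exactly when every one of its $J_1 J_2$ blocks equals $\bm{1}_{p\times p}$, which by the previous step holds iff for each pair $(r,s)$ the difference vector $\bm{d}^{(r,s)}$ contains every element of $\mathbb{F}_p$ with odd multiplicity. Recalling the paper's terminology, ``odd multiplicity for every occurring value'' is exactly \emph{multiplicity odd}, and demanding that every value appears at all is ``contains all elements of $\mathbb{F}_p$''; together they are precisely the stated condition. I expect the routine parts to be the block expansion and the bookkeeping of the shift arithmetic, whereas the crux — the only place needing a genuine idea — is the disjoint-support and linear-independence argument that converts the matrix identity $\sum_j I(d_j)=\bm{1}_{p\times p}$ into a per-value parity condition. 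Some care is warranted here to note that ``odd'' already forces ``at least one'', so multiplicity-odd alone does not imply coverage of all of $\mathbb{F}_p$, which is exactly why the lemma states the two conditions separately.
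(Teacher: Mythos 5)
Your proof is correct and follows essentially the same route as the paper's: expand each $p\times p$ block as $\sum_j I(a_j)I(b_j)^{\mathsf{T}}=\sum_j I(a_j-b_j)$ and use the fact that distinct \acp{CPM} have disjoint supports (so the $\mathbb{F}_2$-sum equals $\bm{1}_{p\times p}$ iff every shift value occurs an odd number of times). Your write-up merely makes explicit what the paper leaves terse — the identity $I(x)^{\mathsf{T}}=I(-x)$, the per-value parity bookkeeping via linear independence, and the clean handling of both directions — but the underlying argument is identical.
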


\begin{proof}
Let $\bm{a}$ be an arbitrary row of $\bm{\mathcal{A}}$ and $\bm{b}$ be a row of $\bm{\mathcal{B}}$. First, we have:
\begin{align}
    \mathrm{Cyc}(\bm{a})\mathrm{Cyc}(\bm{b})^{\mathsf{T}}=\sum_{i=0}^{L-1}I(a_i)I(b_i)^{\mathsf{T}}\nonumber
    =\sum_{i=0}^{L-1}I(a_i-b_i).\label{lemma:proof:sumcpms}
\end{align}
This is a summation of $L$ CPMs. Note that the ones in two CPMs either completely overlap or do not overlap at all. To ensure that $\sum_{i=0}^{L-1}I(a_i-b_i)=\bm{1}_{p\times p}$, the set $\{a_i-b_i:i \in [L]\}$ must contain all elements of $\mathbb{F}_p$ an odd number of times. Thus, the difference vector of any two rows $\bm{a}$ and $\bm{b}$ must be multiplicity odd and contain all elements of $\mathbb{F}_p$ at least once. As this holds for any two rows, it concludes the proof.\end{proof}

In order to obtain an all-one matrix, Lemma~\ref{lemma:condition_for_HHT1} implies $L\geq p$.
Yet $L> p$ implies a Tanner graph with girth $4$ \cite{Hagiwara2007QQLDPC}. Hence, we choose $L=p$.
 In this case, the set $\{a_i-b_i:i \in [L]\}$ yields a permutation vector and a check matrix with girth at least $6$ can be constructed, as shown by Theorem \ref{theorem:QCcosetconstruction}.

\begin{theorem}\label{theorem:QCcosetconstruction}
Let $p$ be a prime number. There exists a base matrix $\bm{\mathcal{H}}\in \mathbb{F}_p^{\ell \times p}$ with $\ell\leq p-1$ such that any partition of $\bm{\mathcal{H}}$ into ${\bm{\mathcal{H}}=\begin{pmatrix}
        \bm{\mathcal{H}}_1^{\mathsf{T}}&
        \bm{\mathcal{H}}_2^{\mathsf{T}}
    \end{pmatrix}}^{\mathsf{T}}$ yields $\mathrm{Cyc}( \bm{\mathcal{H}}_1)\mathrm{Cyc}( \bm{\mathcal{H}}_2)^{\mathsf{T}}=\bm{1}$. Besides, the Tanner graph of $\mathrm{Cyc}( \bm{\mathcal{H}})$ has girth at least 6.
\end{theorem}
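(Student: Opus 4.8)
The plan is to exhibit an explicit base matrix and verify the two required properties directly, exploiting that $\mathbb{F}_p$ is a field and therefore has no zero divisors. Concretely, I would index the rows by the nonzero field elements $s\in\{1,\dots,p-1\}$ and the columns by $j\in[p]$, and set the entry to the product $\mathcal{H}_{s,j}=s\cdot j \bmod p$; this is the multiplication table of $\mathbb{F}_p$ with the first (all-zero) row deleted, so $\ell=p-1$. Note the rows are pairwise distinct, since the entry in column $j=1$ equals $s$ and thus identifies the row.

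Next, I would establish the all-one product for every partition. Since a partition assigns each row of $\bm{\mathcal{H}}$ to exactly one of $\bm{\mathcal{H}}_1,\bm{\mathcal{H}}_2$, any row $\bm{a}$ of $\bm{\mathcal{H}}_1$ and any row $\bm{b}$ of $\bm{\mathcal{H}}_2$ are two distinct rows of $\bm{\mathcal{H}}$; hence by Lemma~\ref{lemma:condition_for_HHT1} it suffices to show that the difference of any two distinct rows $s\neq t$ is multiplicity odd and contains every element of $\mathbb{F}_p$. The difference in column $j$ is $(s-t)\,j\bmod p$, and as $j$ ranges over $[p]$ this takes each value of $\mathbb{F}_p$ exactly once because $s-t\neq 0$ and multiplication by a nonzero element is a bijection on $\mathbb{F}_p$. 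Thus the difference is a permutation vector (each element appears once, hence an odd number of times), and Lemma~\ref{lemma:condition_for_HHT1} yields $\mathrm{Cyc}(\bm{\mathcal{H}}_1)\mathrm{Cyc}(\bm{\mathcal{H}}_2)^{\mathsf{T}}=\bm{1}$ for every partition.

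It remains to rule out $4$-cycles. I would first recall the standard characterization for \ac{QC} codes: the Tanner graph of $\mathrm{Cyc}(\bm{\mathcal{H}})$ contains a $4$-cycle if and only if there are row indices $s_1\neq s_2$ and column indices $j_1\neq j_2$ with $\mathcal{H}_{s_1,j_1}-\mathcal{H}_{s_1,j_2}\equiv \mathcal{H}_{s_2,j_1}-\mathcal{H}_{s_2,j_2}\pmod p$; this follows by tracing the four shift relations around a putative cycle through two row-blocks and two column-blocks and checking that two of its vertices coincide whenever $s_1=s_2$ or $j_1=j_2$. Substituting the construction, the two sides equal $s_1(j_1-j_2)$ and $s_2(j_1-j_2)$, so the condition reduces to $(s_1-s_2)(j_1-j_2)\equiv 0 \pmod p$, which is impossible since $p$ is prime and both factors are nonzero. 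Hence no $4$-cycle exists and the girth is at least $6$.

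The construction and both verifications are short once the right base matrix is chosen; the only genuinely technical step is pinning down the $4$-cycle characterization above with the correct shift bookkeeping, so that the final divisibility argument applies cleanly. I expect this to be the main obstacle, together with the (easy but necessary) observation that the ``any partition'' quantifier collapses to a statement about arbitrary pairs of distinct rows. Everything else is an immediate consequence of $\mathbb{F}_p$ being a field, and I would also remark that any subset of these $p-1$ rows inherits both properties, which is why the bound $\ell\le p-1$ rather than $\ell=p-1$ appears.
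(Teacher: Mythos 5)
Your proposal is correct, and it follows the same overall skeleton as the paper's proof: collapse the ``any partition'' quantifier to arbitrary pairs of distinct rows, show that every pairwise row difference is a permutation vector of $\mathbb{F}_p$, invoke Lemma~\ref{lemma:condition_for_HHT1}, and obtain girth $6$ from the absence of repeated values in those differences. The difference lies in the construction and in what is proved versus cited. The paper builds $\bm{\mathcal{H}}$ from a subgroup $\mathcal{G}'=\{\sigma^0,\dots,\sigma^{\ell-1}\}$ of $\mathbb{F}_p^*$ and coset representatives $\tau_0,\dots,\tau_{T-1}$, forming $\bm{\mathcal{H}}=\begin{pmatrix}\bm{1}_{\ell\times 1}&\tau_0\bm{M}&\cdots&\tau_{T-1}\bm{M}\end{pmatrix}$ with $\bm{M}$ circulant, and the difference-vector argument runs through cosets of $\mathcal{G}'$. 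Your multiplication table $\mathcal{H}_{s,j}=s\cdot j$ is precisely the special case $\mathrm{ord}(\sigma)=p-1$, $T=1$, up to row/column relabeling and an all-zero rather than all-one first column, and your observation that multiplication by a nonzero element is a bijection on $\mathbb{F}_p$ replaces the coset bookkeeping entirely. Your approach buys two things: it is more elementary, and your closing remark that any subset of rows inherits both properties establishes the claim for \emph{every} $\ell\le p-1$, whereas the paper's subgroup construction only realizes $\ell$ dividing $p-1$. What the paper's version buys in return is the circulant block structure (via $\bm{M}$ and the cosets) that it exploits in its concrete code examples, which is more than the bare existence statement requires. Finally, for the girth claim the paper simply cites the known criterion that girth at least $6$ is equivalent to all pairwise row differences of the base matrix being multiplicity free, while you re-derive the equivalent four-cycle shift condition $\mathcal{H}_{s_1,j_1}-\mathcal{H}_{s_1,j_2}\equiv\mathcal{H}_{s_2,j_1}-\mathcal{H}_{s_2,j_2}\pmod{p}$; that condition is standard for quasi-cyclic codes assembled from circulant permutation matrices, so the step you flag as the main technical obstacle can legitimately be discharged by citation, as the paper does, rather than by proof.
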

\begin{proof}
    The proof is constructive. Let $\mathbb{F}_p^*:=\mathbb{F}_p\setminus\{0\}$ denote the multiplicative group of $\mathbb{F}_p$. Additionally, let $\sigma\in \mathbb{F}_p^*$ be of order $\mathrm{ord}(\sigma)=\ell$. Then, ${\mathcal{G}'}=\{\sigma^{0},\ldots,\sigma^{\ell-1}\} $ forms a subgroup of $\mathbb{F}_p^*$. We now form $T=|\mathbb{F}_p^*|/\ell$ cosets $[\tau_i]_{\mathcal{G}'}$ of size $\ell$ by choosing $\tau_i$ in the following way. First, let $\tau_0=1$. Then, for $i\in \{1,2,\ldots,T-1\}$, we consecutively choose
    $\tau_i \in \mathbb{F}_p^*\setminus\bigcup_{j=0}^{i-1} [\tau_j]_{\mathcal{G}'}.$
    This choice ensures that every $\tau_i$ belongs to a distinct coset. Now, form the matrix 
    \[
    \bm{M}=\begin{pmatrix}
        1&\sigma&\ldots &\sigma^{\ell-1}\\
        \sigma^{\ell-1}&1 &&\sigma^{\ell-2}\\
        \vdots& &\ddots &\vdots\\
        \sigma &\sigma^2&\ldots&1
    \end{pmatrix} \in \left(\mathbb{F}_p^*\right){}^{\ell \times \ell},
    \]
    as well as the matrix 
    \[\label{eq:tire_construction}
    \bm{\mathcal{H}}=\begin{pmatrix}
        \bm{1}_{\ell\times 1}&\tau_0\bm{M}&\ldots &\tau_{T-1}\bm{M}
    \end{pmatrix} \in \left(\mathbb{F}_p^*\right)^{\ell \times p}.
    \]
    Let $\bm{\mathcal{H}}_{j_1,:}$ and $\bm{\mathcal{H}}_{j_2,:}$ be two distinct rows of $\bm{\mathcal{H}}$, i.e., $j_1,j_2\in [\ell]$, $j_1\neq j_2$. Their difference vector $\bm{d} :=\bm{\mathcal{H}}_{j_1,:}-\bm{\mathcal{H}}_{j_2,:}$ can be written as
    $\bm{d} = \begin{pmatrix}
    0&\bm{d}^{(0)}&\bm{d}^{(1)}&\cdots&\bm{d}^{(T-1)}\\
    \end{pmatrix}\in \mathbb{F}_p^{p}$
    where for $i\in [T]$,
    \begin{align*}
    \bm{d}^{(i)} &= \tau_i\cdot(\bm{M}_{j_1}-\bm{M}_{j_2})
    :=\begin{pmatrix}
        d^{(i)}_0&d^{(i)}_1&\cdots&d^{(i)}_{\ell-1}\\
    \end{pmatrix}\in \mathbb{F}_p^{\ell}.
    \end{align*}
    For $x\in[\ell]$, we write
    \begin{align*}d^{(i)}_x &=\tau_i \cdot(\sigma^{\ell-j_1+x}-\sigma^{\ell-j_2+x})
    =\tau_i\cdot(\sigma^{-j_1}-\sigma^{-j_2})\cdot\sigma^{\ell+x}.
    \end{align*}
Thus, one can see that 
\[\left\{d^{(i)}_x:x\in [\ell]\right\}\equiv [\tau_i(\sigma^{-j_1}-\sigma^{-j_2})]_{\mathcal{G}'}.\]
 Therefore, the sets $\{d^{(i)}_x:x\in [\ell]\}$ for $i\in[T]$ yield the $T$ distinct cosets permuting all elements in $\mathbb{F}_p^*$. Together with the first $0$ element, $\bm{d}$ forms a permutation vector of $\mathbb{F}_p$. Performing an arbitrary partition of
$\bm{\mathcal{H}}$ into ${\bm{\mathcal{H}}=\begin{pmatrix}
        \bm{\mathcal{H}}_1^{\mathsf{T}}&
        \bm{\mathcal{H}}_2^{\mathsf{T}}
    \end{pmatrix}^{\mathsf{T}}}$ with $\bm{\mathcal{H}}_1 \in {(\mathbb{F}_p^{*})}^{\ell_1\times p}$, $\bm{\mathcal{H}}_2 \in {(\mathbb{F}_p^{*})}^{\ell_2\times p}$, and $\ell_1+\ell_2=\ell$ and using Lemma~\ref{lemma:condition_for_HHT1}, we know that $\mathrm{Cyc}( \bm{\mathcal{H}}_1)\mathrm{Cyc}( \bm{\mathcal{H}}_2)^{\mathsf{T}}=\bm{1}$. In this work, $\ell$ is always an even number and we choose $\ell_1=\ell_2$.

    It remains to show that the Tanner graph of $\mathrm{Cyc}( \bm{\mathcal{H}})$ has girth at least $6$ which is fulfilled if the difference vector of any two rows of $\mathrm{Cyc}( \bm{\mathcal{H}})$ is multiplicity free \cite{Hagiwara2007QQLDPC}. This condition is met because the difference vectors of the rows of $\bm{\mathcal{H}}$ are permutation vectors.
\end{proof}
We use the matrices $\bm{H}_1$ and $\bm{H}_2$ from the proof of Theorem~\ref{theorem:QCcosetconstruction} to construct QSCs as described in Sec.~\ref{sec:code_and_decoder}. The parameters of the constructed codes are listed in Tab.~\ref{tab:QC_codes}
\begin{example}
   We consider an example for $p=7$. Choosing $\sigma=3$ yields the subgroup $\mathcal{G}'=[7]$. Note that $\mathrm{ord}(\sigma)=6$ and $T=1$. Then, choose $\tau_0=1\in [1]_{\mathcal{G}'}$. Since $T=1$, no more $\tau_i$ are required. Hence, we can construct the base matrix 
\[
\bm{\mathcal{H}}=\begin{pmatrix}
        \bm{1}_{\ell\times 1}&\tau_0\bm{M}
    \end{pmatrix}=\footnotesize \begin{pmatrix}
    1&1&3&2&6&4&5\\
    1&5&1&3&2&6&4\\
    1&4&5&1&3&2&6\\
    1&6&4&5&1&3&2\\
    1&2&6&4&5&1&3\\
    1&3&2&6&4&5&1
\end{pmatrix}.
\]
	
One can verify that the difference of any two rows, computed in $\mathbb{F}_p$, results in a permutation vector of $\mathbb{F}_p$. Note that we choose $T=1$ as it yields low\--rate codes, which are suitable for the current quantum channels with high noise level. We take the first three rows of $\bm{\mathcal{H}}$ to be $\bm{\mathcal{H}}_1$ and the last three rows to be $\bm{\mathcal{H}}_2$. Then, we obtain two binary matrices Cyc($\bm{\mathcal{H}}_1$) and Cyc($\bm{\mathcal{H}}_2$) $\in \mathbb{F}_2^{21\times 49}$, both with rank $19$.\footnote{Note that $\bm{H}$ is almost never of full rank. Therefore, the constructed codes naturally have an overcomplete set of stabilizers that will all be used for decoding. The same holds for the codes constructed in Sec.~\ref{sec:coded_from_FG}. The benefits and complexity of this approach are discussed in~\cite{miao2023quaternary}.} By appending an all-one column to them, we obtain the Q1 code.
\end{example}

\begin{table}[t]
    \centering
    \caption{QC Code constructed with ord($\sigma$)=$p-1$.}
    \vspace{-0.5em}
    \label{tab:QC_codes}
    \begin{tabular}{ccccccc}
    \toprule
   Code&$n$&$k$&rate&$p$&$\sigma$&$d$\\
         \hline
         Q1&50&12&0.24&7&3&6\\
         Q2&122&20&0.16&11&2&12\\
         Q3&170&24&0.14&13&2&14\\
         Q4&290&32&0.11&17&3&18\\
         Q5&362&36&0.10&19&3&20\\
         \bottomrule
    \end{tabular}
    \vspace{-1em}
\end{table}

\section{QLDPC Codes from Finite Geometries}\label{sec:coded_from_FG}
In~\cite{Aly08QEG}, a QLDPC code construction using \acp{FG} fulfilling \eqref{eq:HHT1} was proposed. We briefly review the construction and focus on concrete examples of the constructed codes and their decoding performance with CAMEL.

Consider a set $\mathcal{N}$ of $N$ points and a set $\mathcal{M}$ of $M$ lines constructed from a certain finite field. Details on the construction method can be found in~\cite{lin1983error,Kou01EG}. The points and lines form an \ac{FG} if the following conditions are satisfied for some fixed integers $\gamma\geq 2$ and $\rho\geq 2$:
\begin{enumerate}
    \item Each line passes through $\rho$ points,
    \item any two points are on exactly one line,\label{property:2}
    \item each point lies on $\gamma$ lines,   
    \item two lines are either parallel or intersect at one and only one point.
\end{enumerate}

In this work, we use the 2D-\acp{FG} based on finite fields $\mathbb{F}_{q}$ of characteristic $2$ with $q=2^s$. This yields codes with the best minimum distances and decoding performance among the codes we constructed using \acp{FG}. Then, we focus on the most famous examples of \acp{FG} which are the \acp{EG} and the \acp{PG}. A 2D-\ac{EG} consists of $N=q^2$ points and $M=q^2+q$ lines. Moreover, we have $\rho=q$ and $\gamma=q+1$. A 2D-\ac{PG} contains $N=q^2+q+1$ points and $M=q^2+q+1$ lines where $\rho=q+1$ and $\gamma=q+1$. 

We index the points in an \ac{FG} from $0$ to $N-1$. For each line in an \ac{FG}, indexed from $0$ to $M-1$, define an \textit{incidence vector} $\vec{a}\in \mathbb{F}_2^{N}$ as follows: for $i\in [N]$, $a_i=1$ if the point $i$ is on the line and $a_i=0$ otherwise.

Now, form a binary \ac{PCM} $\bm{H}\in \mathbb{F}_2^{N\times M}$ whose columns consist of the incidence vectors of all lines in the \ac{FG}. It follows that $\bm{H}$ has a row weight of $\gamma$ and a column weight of $\rho$. Moreover, it was shown in~\cite{Kou01EG} that the minimum distance of the binary linear code defined by $\bm{H}$ is lower bounded by $\rho$. 

For this binary \ac{PCM} $\bm{H}\in \mathbb{F}_2^{N\times M}$, it is easy to show that 
$\vec{H} \vec{H}^{\mathsf{T}}=\bm{1}$ using condition (\ref{property:2}) and with our assumption that $q=2^s$.\footnote{ We have to point out a mistake in~\cite{Aly08QEG} where the columns of $\bm{H}$ are the incidence vectors of the lines which \textit{do not pass through origin} instead of \textit{all the lines} in the geometry as we do. We can verify that the former does not yield a matrix $\bm{H}$ that fulfills \eqref{eq:HHT1} by looking at any two non-origin points $i$ and $j$ located on a line $\lambda$ which passes through the origin. In our construction, it means that the $\lambda$-th column is the only column where the two rows $\bm{H}_{i,:}$ and $\bm{H}_{j,:}$ have an overlapping one as two points are on exactly one line. Therefore, $\bm{H}_{i,:}\bm{H}_{j,:}^{\mathsf{T}}=1$. Thus, removing column $\bm{H}_{:,\lambda}$ leads to $\bm{H}_{i,:}\bm{H}_{j,:}^{\mathsf{T}}=0$.} To construct QLDPC codes with relatively low rates, we choose ${\bm{H}_X=\bm{H}_Z = \begin{pmatrix}
    \vec{H} \mid \bm{1}
\end{pmatrix} \in \mathbb{F}_2^{N\times (M+1)}}$.\footnote{This unfortunately introduces some short cycles of length 4 between $\bm{H}_X$ and $\bm{H}_Z$. However, thanks to the other good properties of the FG codes, the performance degradation is acceptable.} One can verify that the minimum distance of the code defined by the PCM $\begin{pmatrix}
    \vec{H} \mid \bm{1}
\end{pmatrix}$ is the same as the minimum distance of the code defined by $\bm{H}$, which is lower bounded by $\rho$. Therefore, for codes constructed from the 2D-\ac{FG}, the minimum distance is approximately $\sqrt{n}$.

\begin{table}[t]
    \centering
    \caption{QLDPC codes from 2D \ac{EG} of finite field of characteristic 2.}
    \label{tab:FG_codes}
    \begin{tabular}{cccccc}
    \toprule
         Code&$n$&$k$&rate&$s$&$d$\\
         \hline
         E1&7&1&0.14&1&3\\
         E2&21&3&0.14&2&5\\
         E3&73&19&0.26&3&9\\ 
    E4&273&111&0.41&4&17\\ 
         E5&1057&571&0.54&5&$33$\\ 
         \bottomrule
    \end{tabular}
\end{table}
\begin{figure*}[t]
  \begin{minipage}[b]{6.2cm}
    \centering
    \begin{tikzpicture}
\pgfplotsset{grid style={gray!70}}
\pgfplotsset{every tick label/.append style={font=\footnotesize}}
\begin{axis}[%
name=ax1,
width=\textwidth+0.4cm,
height=6cm,
xmin=0.01,
xmax=0.12,
xmode=log,
ymode=log,
ymin=1e-5,
ymax=1,
xminorgrids,
yminorticks,
xmajorgrids,
ymajorgrids,
axis background/.style={fill=white, mark size=1.5pt},
ylabel={Frame Error Rate (FER)},
ylabel style={yshift=-0.18cm},
label style={font=\small},
xlabel={Depolarizing probability $\varepsilon$},
enlargelimits=false,
legend cell align={left},
legend style={at={(1,0)},anchor=south east,fill opacity=1, text opacity = 1,legend columns=1, row sep = 0pt,font=\footnotesize}
]

\addplot [color=EG4, line width=1pt,mark=diamond, mark size = 2pt]table[row sep=crcr]{
0.08 0.453725\\
0.07 0.29009\\
0.06 0.106422\\
0.05 0.0334999\\
0.04 0.00446716\\
0.03 0.000372533\\
0.02 6.22196e-06\\
0.018 2.05566e-06\\
0.016 7.66665e-07\\
0.014 2.16666e-07\\
0.012 4.99999e-08\\
};
\addlegendentry{CAMEL}

\addplot [color=myred, line width=1pt,mark=o, mark size=1pt]table[row sep=crcr]{
0.1 0.811966\\
0.09 0.682857\\
0.08 0.49325\\
0.07 0.291399\\
0.06 0.131266\\
0.05 0.0667531\\
0.04 0.0384419\\
0.03 0.0247482\\
0.02 0.0198195\\
0.01 0.00991643\\
}; 
\addlegendentry{BP}

\addplot [color=black, dashed, line width=1pt,mark=none]table[row sep=crcr]{
0.1 0.854809\\
0.09 0.70679\\
0.08 0.490308\\
0.07 0.262483\\
0.06 0.106832\\
0.05 0.0296812\\
0.04 0.0047212\\
0.03 0.000298626\\
0.02 5.1109e-06\\
}; 
\addlegendentry{BP (GA)}

\addplot [color=kit-green70, line width=1pt,mark=triangle, mark size=2pt]table[row sep=crcr]{
0.1 0.960422\\
0.09 0.909548\\
0.08 0.781182\\
0.07 0.629295\\
0.06 0.384442\\
0.05 0.198142\\
0.04 0.0834434\\
0.03 0.0348705\\
0.02 0.0207721\\
0.01 0.00906228\\
}; 
\addlegendentry{BP2}

\addplot [color=kit-green70, dashed, line width=1pt,mark=triangle, mark size=2pt, mark options = {solid}]table[row sep=crcr]{
0.05 0.163055\\
0.04 0.0475063\\
0.03 0.00641463\\
0.02 0.000208198\\
0.015 1.81667e-05\\
}; 
\addlegendentry{BP2 (GA)}

\addplot [color=KITyellow!70, line width=1pt,mark=square, mark size=2pt]table[row sep=crcr]{
0.08 0.7496\\
0.07 0.5700000000000001\\
0.06 0.35819999999999996\\
0.05 0.17420000000000002\\
0.04 0.07740000000000002\\
0.03 0.03300000000000003\\
0.02 0.017800000000000038\\
0.01 0.00539999999999996\\
}; 
\addlegendentry{BP2-OSD}

\end{axis}
\end{tikzpicture}  
    \caption{FER vs. depolarizing probability $\varepsilon$ curves when decoding an E4 $[[273,111,17]]$ QLDPC code from an \ac{EG} using different decoding algorithms.}
\label{fig:EG_code_compare_decoder}
  \end{minipage}
  \hspace{0.15cm}
  \begin{minipage}[b]{5.5cm}
    \hspace{-0.3cm}
    \begin{tikzpicture}
\pgfplotsset{grid style={gray!70}}
\pgfplotsset{every tick label/.append style={font=\footnotesize}}
\begin{axis}[%
name=ax1,
width=6.8cm,
height=6cm,
xmin=0.009,
xmax=0.12,
xmode=log,
ymode=log,
ymin=1e-5,
ymax=1,
xminorgrids,
yminorticks,
xmajorgrids,
ymajorgrids,
axis background/.style={fill=white, mark size=1.5pt},
yticklabels={},
xlabel={Depolarizing probability $\varepsilon$},
label style={font=\small},
enlargelimits=false,
legend cell align={left},
legend style={at={(1,0)},anchor=south east,fill opacity=1, text opacity = 1,legend columns=1, row sep = 0pt,font=\footnotesize}
]

\addlegendentry{Q1}
\addlegendentry{Q2}
\addlegendentry{Q3}
\addlegendentry{Q4}
\addlegendentry{Q5}
\addlegendentry{R1}

\addplot [color=QC1, line width=1pt,mark=o]table[row sep=crcr]{
0.1 0.43843\\
0.09 0.354422\\
0.08 0.268494\\
0.07 0.196937\\
0.06 0.131459\\
0.05 0.0736777\\
0.04 0.0376855\\
0.03 0.0157286\\
0.02 0.00427124\\
0.01 0.000494445\\
};
\addplot [color=QC2, line width=1pt,mark=square]table[row sep=crcr]{
0.1 0.712025\\
0.09 0.581267\\
0.08 0.440953\\
0.07 0.286381\\
0.06 0.168857\\
0.05 0.0826277\\
0.04 0.0275881\\
0.03 0.0066705\\
0.02 0.000794946\\
0.01 2.48878e-05\\
};

          \addplot [color=QC3, line width=1pt,mark=triangle]table[row sep=crcr]{
0.1 0.831382\\
0.09 0.709611\\
0.08 0.5392\\
0.07 0.39951\\
0.06 0.234154\\
0.05 0.0976452\\
0.04 0.0271457\\
0.03 0.00470537\\
0.02 0.000359564\\
0.01 5.7777e-06\\
};

\addplot [color=QC4, line width=1pt,mark=diamond]table[row sep=crcr]{
0.1 0.981308\\
0.09 0.951662\\
0.08 0.850136\\
0.07 0.659664\\
0.06 0.423077\\
0.05 0.169758\\
0.04 0.0343885\\
0.03 0.00405704\\
0.02 0.000113335\\
0.01 4.44438e-07\\
};

\addplot [color=QC5, line width=1pt,mark=star]table[row sep=crcr]{
0.1 1\\
0.09 0.975806\\
0.08 0.930233\\
0.07 0.829099\\
0.06 0.524409\\
0.05 0.249027\\
0.04 0.0571059\\
0.03 0.004313\\
0.02 0.0000613\\
0.016 6.33327e-06\\
};

\addplot [color=QC5, dashed, line width=1pt,mark=star]table[row sep=crcr]{
    0.08995451958872763 0.8074878458635921\\
0.07910314675722364 0.6107653946066945\\
0.06956079422694146 0.4151195836467042\\
0.06140126800498785 0.26462050618904925\\
0.049872983923963235 0.1197102765411727\\
0.03990113244336744 0.05531797752399404\\
0.029935772947204904 0.02295631924236914\\
0.01989988536895992 0.006615389893015433\\
0.009924666630948488 0.0008435452308265797\\
};

\end{axis}
\end{tikzpicture}  
    \caption{FER vs. depolarizing probability $\varepsilon$ curves for the quasi-cyclic QLDPC codes shown in Tab.~\ref{tab:QC_codes} and the reference code R1.}
\label{fig:FER1}
  \end{minipage}
  \hspace{0.15cm}
  \begin{minipage}[b]{5.5cm}
    \hspace{-0.4cm}
    \begin{tikzpicture}
\pgfplotsset{grid style={gray!70}}
\pgfplotsset{every tick label/.append style={font=\footnotesize}}
\begin{axis}[%
name=ax1,
width=6.8cm,
height=6cm,
xmin=0.009,
xmax=0.12,
xmode=log,
ymode=log,
ymin=1e-5,
ymax=1,
xminorgrids,
yminorticks,
xmajorgrids,
ymajorgrids,
axis background/.style={fill=white, mark size=1.5pt},
yticklabels={},
xlabel={Depolarizing probability $\varepsilon$},
label style={font=\small},
legend cell align={left},
legend style={at={(1,0)},anchor=south east,fill opacity=1, text opacity = 1,legend columns=1, row sep = 0pt,font=\footnotesize}
]
\addlegendentry{E1}
\addlegendentry{E2}
\addlegendentry{E3}
\addlegendentry{E4}
\addlegendentry{E5}

\addplot [color=EG1, line width=1pt,mark=o]table[row sep=crcr]{
0.1 0.119632\\
0.09 0.0950141\\
0.08 0.0782053\\
0.07 0.062368\\
0.06 0.0466216\\
0.05 0.0348514\\
0.04 0.0242231\\
0.03 0.0126104\\
0.02 0.00631675\\
0.01 0.00158278\\};

\addplot [color=EG2, line width=1pt,mark=square]table[row sep=crcr]{
0.1 0.182203\\
0.09 0.140457\\
0.08 0.09632\\
0.07 0.0732004\\
0.06 0.0494415\\
0.05 0.0333149\\
0.04 0.0195696\\
0.03 0.00839071\\
0.02 0.00286992\\
0.01 0.00036938\\
};

\addplot [color=EG3, line width=1pt,mark=triangle]table[row sep=crcr]{
0.1 0.482372\\
0.09 0.347575\\
0.08 0.252094\\
0.07 0.165932\\
0.06 0.108468\\
0.05 0.0459823\\
0.04 0.0196015\\
0.03 0.00466688\\
0.02 0.000632926\\
0.01 0.000019\\
};
\addplot [color=EG4, line width=1pt,mark=diamond]table[row sep=crcr]{
0.08 0.453725\\
0.07 0.29009\\
0.06 0.106422\\
0.05 0.0334999\\
0.04 0.00446716\\
0.03 0.000372533\\
0.02 6.22196e-06\\
};
\addplot [color=EG5, line width=1pt,mark=star]table[row sep=crcr]{
0.08 1\\
0.07 1\\
0.06 0.979592\\
0.05 0.798107\\
0.04 0.275081\\
0.03 0.0132561\\
0.03 0.0150717\\
0.028 0.00429531\\
0.026 0.00159324\\
0.024 0.000300749\\
0.02 0.000007\\
};

\addplot [color=EG5, dashed, line width=1pt,mark=star]table[row sep=crcr]{
0.030003338898163603  0.061017519056170144\\
0.024994991652754595  0.011433569030716717\\
0.01998664440734558  0.0007844142136982535\\
0.018003338898163607  0.0002196939927297685\\
0.016  0.00005204315734195076\\
0.013956594323873122  0.000011338235012178496\\
}; 
\addlegendentry{R2}

\end{axis}
\end{tikzpicture}  
    \caption{FER vs. depolarizing probability $\varepsilon$ curves for QLDPC codes constructed using \acp{EG} shown in Tab.~\ref{tab:FG_codes} and the reference code R2.}
\label{fig:FER2}
  \end{minipage}
\end{figure*}

The parameters of the exemplary codes constructed from \acp{EG} are listed in Tab.~\ref{tab:FG_codes}. As for the \ac{PG} codes, assume 
that a QLDPC code constructed from an \ac{EG} using a finite field has parameters $[[n,k,d]]$. Then, the QLDPC code constructed from a \ac{PG} using the same field has parameters $[[n+1,k+1,d+1]]$. Hence, we do not list them explicitly.

\section{Numerical Results}\label{sec:simulation}

We assess the proposed scheme using Monte Carlo simulations over the quantum depolarizing channel where the three types of Pauli errors occur with equal probability $\varepsilon/3$. 
At least 300 frame errors are collected to obtain the \ac{FER} for each data point. 
All BP decoding paths use sum-product algorithm with $15$ iterations and a flooding schedule.

First, we highlight the importance of decoding using our proposed ensemble decoder. Figure~\ref{fig:EG_code_compare_decoder} depicts the decoding results using different decoding algorithms for the E4 code as an example. When decoded with a single BP decoder, the decoding performance is poor due to numerous short cycles of length 4. However, when using ensemble decoding in CAMEL as described in Sec.~\ref{sec:code_and_decoder}, the performance improves by orders of magnitude. Additionally, no error floor is observed when simulating at a low FER of $10^{-7}$. Moreover, as a benchmark, we evaluate a genie-aided (GA) BP decoder which is fed the correct value of $\mathsf{v}_n$. The ensemble BP decoding has almost the same performance as the genie-aided version. As the reliability of the guessed value is set to infinity, corresponding to a probability of $1$, the paths with the wrong guess usually either fail to converge or will find a high-weight error estimate. For completeness, we also plot the decoding results using a pair of binary BP (BP2) decoders and its genie-aided version where the correct value of $\mathsf{v}_n$ is fed to both decoders. They are outperformed by their respective quaternary counterpart. Furthermore, we decode the E4 code using the BP2-OSD decoder with combination sweep (CS) strategy and an order of $42$, which is implemented in~\cite{roffe2020decoding}. Further increasing the order of OSD does not improve the performance noticeably. The decoding gain achieved by the BP-OSD decoder is limited compared to the genie-aided BP2 decoder. For other constructed codes, we observe similar results.

We also compare the performance of the QLDPC codes constructed in Sec.~\ref{sec:coded_from_QC} and Sec.~\ref{sec:coded_from_FG} using the proposed ensemble BP decoder with existing codes as depicted in Fig.~\ref{fig:FER1} and Fig.~\ref{fig:FER2}, respectively. Our Q5 code outperforms the $[[400,16]]$ hypergraph product code (R1) with BP-OSD decoder~\cite{roffe2020decoding} and our E5 code outperforms the $[[800,400]]$ bicycle code (R2) using the modified non-binary decoder with enhanced feedback~\cite{Babar15,Wang12enhanced}. Note that both reference codes use a significantly more complex decoding algorithm and have lower code rates than our constructed codes. We conclude that CAMEL achieves great decoding performance with low decoding latency.

\section{Conclusion}\label{sec:conclusion}
In this paper, we proposed a novel joint code and decoder design for QLDPC codes with good quarternary BP decoding performance without the need for any major modification of the BP decoder or any post-processing steps. Simulation results show a significant improvement compared to conventional BP decoding.

Future work includes investigating the generalization of the proposed CAMEL scheme to provide codes with improved properties. For example, constructing codes that are more degenerate, enabling joint decoding of the circuit-level noise, and constructing codes with local qubit connectivity.

\IEEEtriggeratref{2}

\IEEEtriggeratref{12}

\end{document}